\newtheorem{theorem}{Theorem}
\title{%
A Generalized Prefix Construction
for OFDM Systems over Quasi-Static Channels
}
\author{%
Todor~Cooklev%
\thanks{%
Todor~Cooklev is with the Wireless Technology Center,
Indiana University-Purdue University Fort Wayne (IPFW), USA
(e-mail: cooklevt@ipfw.edu).
}
\and
Hakan~Do\u gan%
\thanks{%
Hakan~Do\u gan is with the Department of Electrical and Electronics
Engineering, Istanbul University,34320, Avcilar, Istanbul, Turkey
(e-mail: \mbox{hdogan@istanbul.edu.tr}).
}
\and
Renato~J.~Cintra%
\thanks{%
Renato~J.~Cintra is with
the Departamento de Estat\'istica, Universidade Federal de Pernambuco, Recife, Brazil
(e-mail: rjdsc@stat.ufpe.org).
}
\and
Hakan Y{\i}ld{\i}z%
\thanks{%
Hakan Y{\i}ld{\i}z is with the Ericsson, 34398, Maslak, Istanbul, Turkey
(e-mail: hakan.yildiz@ericsson.com).
}
}
\date{}
\newcommand{\myabstract}{%
All practical OFDM systems require a prefix to eliminate
inter-symbol interference at the receiver. Cyclic prefix~(CP) and
zero-padding are well-known prefix construction methods, the former
being
the most employed technique in practice due to its lower complexity.
In this paper we construct an OFDM system with a generalized CP. It
is shown that the proposed generalized prefix effectively
makes the channel experienced by the packet different from
the actual channel. Using an optimization procedure, lower bit
error rates can be achieved, outperforming other prefix construction
techniques. At the same time the complexity of the technique is
comparable to the CP method. The presented simulation
results show that the proposed technique not only outperforms the
CP method, but is also more robust in the presence of channel
estimation errors and mobility. The proposed method is
appropriate for practical OFDM systems.
}
\newcommand{\mykeywords}{%
Generalized prefix,
generalized skew-circular convolution,
\\
orthogonal frequency-division multiplexing (OFDM), deep fading channels.
}
\begin{document}

\makeatletter
\if@twocolumn

\twocolumn[%
  \maketitle
  \begin{onecolabstract}
    \myabstract
  \end{onecolabstract}
  \begin{center}
    \small
    \textbf{Keywords}
    \\\medskip
    \mykeywords
  \end{center}
  \bigskip
]
\saythanks

\else

  \maketitle
  \begin{abstract}
    \myabstract
  \end{abstract}
  \begin{center}
    \small
    \textbf{Keywords}
    \\\medskip
    \mykeywords
  \end{center}
  \bigskip
  \onehalfspacing
\fi

\section{Introduction}
Bandlimited wireless communication channels exhibit significant
variations in gain and phase along different frequencies. This
phenomenon reduces the capacity of the channel. In the case of
orthogonal frequency-division multiplexing (OFDM) systems,
associated subcarriers can experience deep fades. In the extreme
case these deep fades may become spectral nulls. When a cyclic
prefix~(CP) is used, these spectral nulls are known to significantly
limit the bit error rate performance.

In literature, there are several approaches to deal with the deep
fade problem. Practical OFDM systems~\cite{IEEE802.11WG} simply
accept this difficulty, and try to minimize it by the usage of
forward error correction~(FEC) techniques. However, applying FEC
increases complexity and overhead, and it may not be the right tool
for this task~\cite{wang2001linearly}. In fact, several signal
processing methods have been introduced that, without using FEC,
improve the performance in channels with spectral nulls. One such
method is the precoded OFDM, which inserts one zero between every
two information symbols~\cite{xia2001precoded}. Its disadvantage is
the reduction of data rate by one-third. In addition, precoded OFDM
requires the wireless channel to be static for the duration of
multiple OFDM symbols. This requirement may force the duration of
the OFDM block to be shorter than the channel coherence time.

Another approach is to replace the commonly used CP with
zero-padding (ZP)~\cite{scaglione1999redundant,muquet2002cyclic}.
Such a technique guarantees that the vector of information symbols
can be recovered regardless of the presence of spectral
nulls~\cite{scaglione1999redundant,muquet2002cyclic}. The reason is
that the equivalent channel matrix is invertible even if there are
spectral nulls. However, ZP has disadvantages. In particular, its
associated channel matrix is not circulant and cannot be
diagonalized by the discrete Fourier transform~(DFT), which is
generally computed via a fast Fourier transform~(FFT)
algorithm~\cite{blahut2010fast}. As a result, ZP is substantially
more complex than CP since it requires a matrix inversion for every
packet. Low-complexity ZP schemes have been developed, which however
are suboptimal~\cite{muquet2002cyclic}. Zero-padding also proves to
be ineffective in terms of timing and frequency
synchronization~\cite{andrews2007fundamentals}. As a result, neither
precoding, nor ZP are techniques employed in practical systems, such
as the IEEE~802 standards~\cite{IEEE802.11WG}. Indeed, all practical
systems use a CP because of its computational simplicity and
convenience for detection and synchronization.

In~\cite{muck2006pseudo} Muck~\emph{et al.} proposed
the insertion of a pseudo-random postfix~(PRP)
between OFDM symbols. This postfix is
constructed by means of a known vector weighted by
a pseudo-random scalar sequence to enable
semi-blind channel estimation.
As described in~\cite{muck2006pseudo},
at the receiver,
PRP-OFDM symbols are transformed into ZP-OFDM symbols by
subtracting the contribution of the postfix.
Subsequent additional processing is performed to
obtain a symbol corresponding to CP-OFDM.

{Other approaches that have been suggested are based on the
idea that the CP can be replaced by a pre-defined sequence of known
symbols} \cite{deneire2001},\cite{huemerunique}. {These
approaches are referred to as known symbol padding (KSP) OFDM and
unique word (UW) OFDM} \cite{Hofbauer}. { In these systems,
the known samples in the guard interval could be used for channel,
timing and carrier offset estimation.}

{UW-OFDM is a fundamentally different technique, which
introduces correlation among the subcarriers and the UW is part of
the DFT interval, whereas the prefix in CP-OFDM is added after the
DFT. This makes comparisons under identical assumptions difficult.
For example, the design of UW is considered subject to a constraint
on the peak-to-average power ratio (PAPR)}~\cite{coon2010designing}.
{In KSP-OFDM, the KSP part is not part of DFT, similar to
CP-OFDM, but performance simulations reported in}
\cite{steendam2007} {show that KSP-OFDM suffers from noise
enhancement and has a slightly worse performance than CP-OFDM.}

Our work is different:
we use a prefix like the CP, but weighted by
a complex number.
This new prefix can be referred to as a generalized prefix.
The proposed technique effectively changes the phases of the multipath
components.
We show that if the prefix is appropriately constructed
the technique can effectively transform the wireless channel
experienced by the OFDM symbols into a higher-capacity channel,
leading to a lower bit-error rate.
The described prefix can convert
a channel with spectral nulls or deep fades into
a channel with no spectral nulls or fades that are less deep.
{Furthermore,
if all other parameters are identical,
it outperforms several prefix
construction techniques
at the cost of a minor increase in the computational complexity.}
We present extensive simulation results.
The proposed technique outperforms
the CP method, particularly in channels with long delay spreads that are characterized by deep fades.
Furthermore, the proposed system is more robust to
channel estimation errors, i.e. compared with CP-OFDM the margin of improvement is even higher
in the presence of channel estimation errors.

The organization of the paper is as follows.
In Section~\ref{section.signal} we formulate the problem mathematically.
A generalized prefix is introduced and
discussed in Section~\ref{section.generalized}.
Next we detail how the proposed generalized prefix
can be optimized and the associated computational complexity
is analyzed in Section~\ref{section.optimal}.
Section~\ref{section.simulation} contains
the simulation results and concluding remarks are drawn in Section~\ref{section.conclusions}.

\section{Signal Model}
\label{section.signal}

We consider a conventional OFDM system with
single transmitting and receiving antennas.
The complex information symbols, referred to as subcarriers,
are denoted by $X[k]$, $k = 0,1, \ldots, N-1$,
where $N$ is the number of subcarriers.
These complex symbols are obtained from
the information bits via a bit-to-symbol mapping such as
quadrature phase-shift keying,
or more generally $M$-ary quadrature amplitude modulation~(QAM).
Without any limitation to the proposed methodology,
hereafter we assume that the QAM scheme is employed.

An application of the inverse DFT generates a new signal~$\mathbf{x}$
that combines all subcarriers:
\begin{equation}
\label{1}
\mathbf{x}
=
\mathbf{F}_N^*
\cdot
\mathbf{X},
\end{equation}
where $\mathbf{F}_N$ is the $N$-point DFT matrix which is defined as
$\mathbf{F}_N \triangleq \left[ \exp(-2\pi ik /N)
\right]_{i,k=0,\ldots,N-1}$, superscript~${}^\ast$ denotes Hermitian
conjugation, and $\mathbf{X} = \big[ X[0], X[1], \ldots, X[N-1]
\big]^t$.
Next, cyclic prefix (CP) of length $K$ is added to the OFDM
symbols~$\mathbf{x}$ to obtain a new vector
$\widetilde{\mathbf{x}}$:
\begin{equation}
\begin{split}
\widetilde{\mathbf{x}} \triangleq \big[ x[N-K],  x[N-K-1], \ldots,
x[N - 1], x[0], x[1], \ldots, x[N - 1] \big]^t ,
\end{split}
\end{equation}
which has $N+K$ components.

A wireless channel with $L$ multipath components can be modeled with respect to
the baseband by an $L$-point FIR filter,
whose transfer function is
\begin{equation}
H(z) = h[0] + h[1] z^{-1} + \cdots + h[L-1] z^{-L+1}
.
\end{equation}
The CP prevents ISI,
as long as its length exceeds the duration of the impulse response of the channel,
or $K \ge L - 1$.
Then
the received signal is a linear convolution between
the transmitted signal and the channel impulse response for
$m = 0, 1, \ldots, N+K+L-2$:
\begin{equation}
y[m]
=
\sum_{l= 0}^{L-1}
h[l]
\widetilde{x}[m-l] + w[m],
\quad
\end{equation}
where  $w[m]$ is Gaussian noise vector with zero mean and
variance $\sigma^2 = N_0/2$
and $N_0$ is the single-sided power spectral density.
Over the first $K$ received symbols there is inter-symbol interference,
and
these symbols are ignored.
The receiver processes $y[m]$ only for $m = K, K + 1, \ldots, N + K - 1$,
or alternatively
\begin{align}
y[m]
=
\sum_{l = 0}^{L - 1}
h[l]
x[\left\langle m - K - l \right\rangle_N]
+ w[m],
\end{align}
where $\langle \cdot \rangle_N$ represents the modulo $N$ operation.
The CP removal not only eliminates the ISI, but also converts the
linear convolution with the channel impulse response into a cyclic
convolution. Consider the $N$-point signal $\mathbf{y} = \left[
y[K], y[K + 1], \ldots, y[N+K-1] \right]^t$ and the vector of
channel impulse response components $\mathbf{h} = \left[ h[0], h[1],
\ldots, h[L-1], 0, \ldots, 0 \right]^t$ padded with an appropriate
number of zeros, so that its length becomes $N$. By taking~(\ref{1})
into account, the referred cyclic convolution can be expressed in
matrix terms according to
\begin{equation}\label{6}
\mathbf{y} = \mathbf{H} \cdot \mathbf{F}_N^* \cdot \mathbf{X} + \mathbf{w},
\end{equation}
where $\mathbf{w} = \left[ w[K], w[K+1], \ldots, w[N+K-1]\right]^t$
and $\mathbf{H}$ is the circulant matrix whose first column
is the vector $\mathbf{h}$:
\begin{align}
\mathbf{H}
=
\left[
\begin{smallmatrix}
h[0]   &  0     & \cdots &  0      & h[L-1]  & h[L-2] & \cdots & h[1]   \\
h[1]   &  h[0]  & 0      &  \ddots & 0       & h[L-1] & \ddots & h[2]   \\
\vdots & \ddots & \ddots &  \ddots & \ddots  & \ddots & \ddots & \vdots \\
0      & \cdots & 0      & h[L-1]  & h[L-2]  & \cdots & h[1]   & h[0]
\end{smallmatrix}
\right]
.
\end{align}
It is assumed that the channel is characterized by slow fading,
i.e. the channel impulse response does not change within one OFDM symbol duration.
At the receiver shown in Fig.~\ref{fig:rec2},
the forward DFT is applied to obtain
\begin{equation}
\begin{split}
\mathbf{Y} &= \mathbf{F}_N \cdot \mathbf{y}
= \mathbf{F}_N \cdot \mathbf{H} \cdot \mathbf{F}_N^* \cdot \mathbf{X}
+
\mathbf{F}_N \cdot \mathbf{w}
.
\end{split}
\end{equation}
The information symbol vector $\mathbf{X}$ must be recovered from $\mathbf{Y}$.
This can be accomplished by the multiplication of $\mathbf{Y}$  with the inverse of
$\mathbf{F}_N \cdot \mathbf{H} \cdot \mathbf{F}_N^*$.
This matrix inversion is greatly simplified numerically,
because the DFT diagonalizes circulant matrices:
\begin{equation}
\label{9}
\begin{split}
\mathbf{F}_N \cdot \mathbf{H} \cdot \mathbf{F}_N^*
&=
\mathrm{diag}
\left(
H[0], H[1], \ldots, H[N-1]
\right)
\\
&= \mathrm{diag}\left( \mathbf{F}_N \cdot \mathbf{h} \right)
,
\end{split}
\end{equation}
where $\mathrm{diag}(\cdot)$ returns a diagonal matrix
and
the sequence $H[k]$, $k=0,1,\ldots,N-1$, is the $N$-point DFT
of the channel impulse response.

\begin{figure}%
\centering
\input{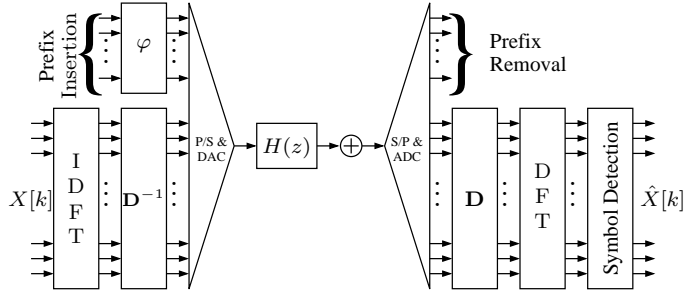}
\caption{Proposed OFDM system.}
\label{fig:rec2}
\end{figure}

The diagonalization operation effectively decomposes
the channel into
parallel and ISI-free sub-channels.
In other words,
the frequency-selective channel is transformed into
a channel with flat fading per subcarrier.
As a result,
a simple zero-forcing (ZF) detector
can be employed
to obtain an estimate $\widehat{\mathbf{X}}$
of the transmitted information sequence.
Requiring only one division operation per subcarrier,
the elements of $\widehat{\mathbf{X}}$ are given by
\begin{equation}\label{10}
\widehat X[k] = \frac{Y[k]}{H[k]},
\quad
k = 0,1,\ldots,N-1
.
\end{equation}

The main disadvantage of the described OFDM system is the
performance in channels with spectral nulls. Indeed, if one or more
of the quantities $H[k]$, $k=0,1,\ldots,N-1$, are equal to zero,
then~(\ref{10}) yields to an indetermination. Equivalently, the
diagonal matrix $\mathrm{diag}\left( \mathbf{F}_N \cdot \mathbf{h}
\right)$ is not invertible.

Let the probability of bit error for the chosen QAM bit-to-symbol
mapping scheme in an AWGN channel be $P_\text{QAM}\left( E_b/N_0
\right)$, where $E_b$ is the energy per bit. Then the probability of
bit error $P_e$ of the OFDM system with cyclic prefix
is~\cite{xia2001precoded}:
\begin{equation}
\label{11}
P_e
=
\frac{1}{N}
\sum_{k = 0}^{N-1}
P_\text{QAM}
\left(
\frac{N}{N + K} \frac{E_b}{N_0}
|H[k]|^2
\right)
.
\end{equation}
Notice that the quantity $\frac{N}{N + K} \frac{E_b}{N_0}$ is a constant.

If the channel has one spectral null, or $H[k]=0$ for a given
subcarrier~$k$, then the symbol transmitted on that subcarrier has a
probability of error equal to $1/2$, regardless of the
signal-to-noise ratio. Since the probability of error of the entire
system is the average of the probabilities of error for each of the
subcarriers, it is known that the subcarriers with the largest
probability of error dominate the overall system error probability.
Since there are $N$ subcarriers, the probability of error will
always be higher than  $P_e \ge 1/(2N)$. In other words, there is an
error floor for high signal-to-noise values, or equivalently, there
is a loss in frequency diversity~\cite{scaglione1999redundant}. The
effect of deep fades will be similar, only less severe. This is a
significant disadvantage.

As mentioned in the introduction, ZP is another possible prefix
construction technique. After the receiver discards all received
samples, except those from $K$ to $N+K-1$, ZP leads to a received
signal as described in~(\ref{6}) with the equivalent channel matrix
being equal to:
\begin{equation}
\mathbf{H}_\text{ZP}
=
\left[
\begin{smallmatrix}
h[0]    &  0     & \cdots &  0      & 0  & \cdots & 0   & 0    \\
h[1]    &  h[0]  & 0      &  \ddots & \ddots  & \ddots & \ddots & 0      \\
 \vdots & \ddots & \ddots &  \ddots & \ddots  & \ddots & \ddots & \vdots \\
0       & \cdots & 0      & h[L-1]  & h[L-2]  & \cdots & h[1]   & h[0] \\
0       & \cdots & 0      & 0       & h[L-1]  & \cdots & h[2]   & h[1] \\
\vdots  & \vdots & \vdots & \vdots  & \vdots  & \vdots & \vdots & \vdots
\end{smallmatrix}
\right]
.
\end{equation}
This matrix is similar to~$\mathbf{H}$;
however it is not necessarily square.
Indeed, it is a tall channel matrix.

The matrix~$\mathbf{H}_\text{ZP}$ cannot be diagonalized by applying
the forward DFT. As a result, the receiver has to perform an
{$N\times N$ matrix inversion for every OFDM symbol}
to obtain an estimate
$\widehat{\mathbf{X}} $ of the information symbol vector:
\begin{equation}
\widehat{\mathbf{X}}
=
\mathbf{F}_N \cdot \mathbf{H}^{+}_\text{ZP} \cdot \mathbf{y}
=
\mathbf{X} + \mathbf{F}_N \cdot \mathbf{H}^{+}_\text{ZP} \cdot \mathbf{w}
,
\end{equation}
where superscript ${}^+$ denotes the pseudoinverse operation.

{The matrix $\mathbf{H}_\text{ZP}$ is always full column rank
---
even in the presence of spectral nulls
---
as long as $h[0]\neq 0$, which is a reasonable assumption.
However, still the ZP method is not used by any practical system due
to its high computational complexity.}

\section{Proposed Generalized Prefix}
\label{section.generalized}

In the proposed system the prefix is constructed by multiplying the
last $K$ samples with a complex number $\varphi  \ne 0$
and appending them to $\mathbf{x}$:
\begin{equation}\label{13}
\begin{split}
\widetilde{\mathbf{x}}
=
\big[
\varphi x[N-K],  \varphi x[N-K-1], \ldots, \varphi x[N - 1],
x[0], x[1], \ldots, x[N - 1]
\big]
.
\end{split}
\end{equation}

Just like in the case for CP, the receiver processes $y[m]$ only for
$m=K, K+1, \ldots, N+K-1$. The received signal can be expressed as
\begin{equation}\label{14}
y[m]
=
\sum_{l=0}^{L-1}
h[l]
u[m-K-l]
x[\left\langle m-K-l \right\rangle_N]
+ w[m],
\end{equation}
where
the sequence $u[\cdot]$ is
\begin{equation}
u[n] =
\begin{cases}
   1, & \text{if $n \geq 0$,}  \\
   \varphi,  & \text{if $n < 0$.}  \\
\end{cases}
\end{equation}

It is worth pointing out that~(\ref{14}) has not been defined as
a convolution operation in the signal processing
literature~\cite{oppenheim2009discrete,martucci1994symmetric,ersoy1985semisystolic}.
The special cases $\varphi  = 1$  and $\varphi  = -1$
correspond to circular and skew-circular
convolution~\cite{martucci1994symmetric,ersoy1985semisystolic}.
We suggest to call the particular type of convolution
described in~(\ref{14}) as generalized skew-circular convolution.

Considering the proposed prefix,
the equivalent channel matrix is given by:
\begin{equation}
\begin{split}
\mathbf{H}_\varphi
=
\left[
\begin{smallmatrix}
h[0]   &  0     & \cdots &  0      & \varphi h[L-1]  & \varphi h[L-2] & \cdots & \varphi h[1]  \\
h[1]   &  h[0]  & 0      &  \ddots & 0               & \varphi h[L-1] & \ddots & \varphi h[2]  \\
\vdots & \ddots & \ddots &  \ddots & \ddots          & \ddots         & \ddots & \vdots        \\
0      & \cdots & 0      &  h[L-1] & h[L-2]          & \cdots         & h[1]   & h[0]
\end{smallmatrix}
\right]
.
\end{split}
\end{equation}
The matrix $\mathbf{H}_\varphi$ can be diagonalized. Let us denote
the elements of $\mathbf{H}_\varphi$ as $h_{ij}^\varphi$,
$i,j=1,2,\ldots,N$.
We note that $\mathbf{H}_\varphi$ is a Toeplitz matrix
with the property $h_{ij}^\varphi  = \varphi h[N+i-j]$
for $i-j<0$.
For $i-j\geq0$,
$\mathbf{H}_\varphi$ is constructed as $\mathbf{H}$.
We suggest that this matrix type is referred to as generalized
skew-circular, following the name of the convolution operation.
The following result allows the use of the proposed prefix in OFDM systems.

\begin{theorem}
If $\psi$ is a complex number for which
\begin{equation}\label{17}
\psi^N = \varphi,
\end{equation}
then
\begin{equation}
\begin{split}
\mathbf{F}_N
\cdot
\mathbf{D}
\cdot
&
\mathbf{H}_\varphi
\cdot
\mathbf{D}^{-1}
\cdot
\mathbf{F}_N^\ast
=
\mathrm{diag}
\left( H_\psi[0], H_\psi[1], \ldots, H_\psi[N-1] \right)
=
\mathrm{diag}\left(\mathbf{F}_N \cdot \mathbf{D} \cdot \mathbf{h} \right),
\end{split}
\end{equation}
where $H_\psi [k]$, $k=0,1, \ldots, N-1$ is the DFT of the vector
$\mathbf{D} \cdot \mathbf{h}$, and
\begin{equation}\label{19}
\mathbf{D}=\mathrm{diag} \left( 1, \psi, \psi^2, \ldots, \psi^{N-1} \right).
\end{equation}
\end{theorem}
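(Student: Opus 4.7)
The plan is to show that the similarity transform $\mathbf{D}\cdot\mathbf{H}_\varphi\cdot\mathbf{D}^{-1}$ reduces the generalized skew-circular matrix $\mathbf{H}_\varphi$ to an ordinary circulant matrix, and then to invoke the standard fact~(\ref{9}) that the DFT diagonalizes circulants. The condition $\psi^N=\varphi$ will be used in exactly one place, to absorb the scalar $\varphi$ on the ``wrapped'' diagonals of $\mathbf{H}_\varphi$ into the powers of $\psi$ supplied by $\mathbf{D}$.

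First I would compute the $(i,j)$-entry of $\widetilde{\mathbf{H}}\triangleq\mathbf{D}\cdot\mathbf{H}_\varphi\cdot\mathbf{D}^{-1}$ using $0$-indexing. Since $\mathbf{D}$ is diagonal with entries $\psi^i$, this entry equals $\psi^{i-j}\,h_{ij}^\varphi$. From the structure of $\mathbf{H}_\varphi$ described just after its definition, $h_{ij}^\varphi=h[i-j]$ when $i\geq j$ and $h_{ij}^\varphi=\varphi\,h[N+i-j]$ when $i<j$ (with $h[m]=0$ for $m\geq L$). For $i\geq j$ the entry is $\psi^{i-j}h[i-j]$; for $i<j$ it is $\psi^{i-j}\varphi\,h[N+i-j]=\psi^{N+i-j}h[N+i-j]$ after using $\psi^N=\varphi$. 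In both cases the entry depends only on $\langle i-j\rangle_N$ and equals $\psi^{m}h[m]$ with $m=\langle i-j\rangle_N$.

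This is the crux of the argument: the entries of $\widetilde{\mathbf{H}}$ form a circulant matrix whose first column is $(\psi^0 h[0],\psi^1 h[1],\ldots,\psi^{N-1}h[N-1])^t=\mathbf{D}\cdot\mathbf{h}$. I would then apply~(\ref{9}) to the circulant $\widetilde{\mathbf{H}}$ with first column $\mathbf{D}\cdot\mathbf{h}$ to obtain
\begin{equation*}
\mathbf{F}_N\cdot\widetilde{\mathbf{H}}\cdot\mathbf{F}_N^\ast
=\mathrm{diag}\!\left(\mathbf{F}_N\cdot\mathbf{D}\cdot\mathbf{h}\right),
\end{equation*}
which, after substituting $\widetilde{\mathbf{H}}=\mathbf{D}\cdot\mathbf{H}_\varphi\cdot\mathbf{D}^{-1}$, yields the claimed identity with $H_\psi[k]$ defined as the $k$-th component of $\mathbf{F}_N\cdot\mathbf{D}\cdot\mathbf{h}$.

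The main obstacle is purely bookkeeping: verifying that the two case distinctions ($i\geq j$ versus $i<j$) collapse into the single expression $\psi^{\langle i-j\rangle_N}h[\langle i-j\rangle_N]$. Everything else (the diagonal conjugation being an entrywise multiplication by $\psi^{i-j}$, and the DFT diagonalization of circulants) is mechanical. No other subtlety arises because the condition $\psi^N=\varphi$ is exactly what is needed to convert the extra factor of $\varphi$ appearing in the upper-triangular block of $\mathbf{H}_\varphi$ into the missing powers of $\psi$ that would otherwise spoil the circulant structure.
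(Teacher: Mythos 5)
Your proposal is correct and follows essentially the same route as the paper's own proof: conjugating $\mathbf{H}_\varphi$ by $\mathbf{D}$, using $\psi^N=\varphi$ to absorb the factor $\varphi$ on the wrapped diagonals, verifying that the resulting matrix is circulant with first column $\mathbf{D}\cdot\mathbf{h}$, and then invoking the DFT diagonalization property~(\ref{9}). The only difference is cosmetic (you use $0$-based indexing where the paper uses $1$-based), and your write-up is if anything slightly cleaner in collapsing both cases into the single expression $\psi^{\langle i-j\rangle_N}h[\langle i-j\rangle_N]$.
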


\begin{proof}
Suppose that the $(i,j)$ entry of
$\widetilde{\mathbf{H}} = \mathbf{D} \cdot \mathbf{H}_\varphi \cdot \mathbf{D}^{-1}$
is denoted by $\widetilde{h}_{ij}$.
Considering the DFT property in~(\ref{9}),
the above result can be proven by establishing that
$\mathbf{D} \cdot \mathbf{H}_\varphi \cdot \mathbf{D}^{-1}$
is a circulant matrix.
Circulant matrices have the property $\widetilde{h}_{ij} = \widetilde{h}_{kl}$
when  $k - l = \left\langle i-j \right\rangle_N$.
Thus, the following holds:
\begin{equation} \label{20}
\begin{split}
\widetilde{h}_{ij}
=&
\begin{cases}
\psi^{i-1} h[i-j] \psi^{-j+1},  & \text{if $i-j\geq0$,} \\
\psi^{i-1} \psi^N h[N+i-j] \psi^{-j+1},  & \text{if $i-j<0$}
\end{cases}
\\
=&
\begin{cases}
\psi^{i-j}   h[i-j],   & \text{if $i-j\geq0$,} \\
\psi^{N+i-j} h[N+i-j], & \text{if $i-j<0$.}
\end{cases}
\end{split}
\end{equation}

Considering~(\ref{21}) and $h_{ij}^\varphi  = \varphi h[N+i-j]$
for $i-j<0$, if  $k-l = \left\langle i-j \right\rangle_N$,
then it follows that $\widetilde{h}_{ij} = \widetilde{h}_{kl}$ .
Therefore the matrix
$\widetilde{\mathbf{H}}$ is circulant.
Additionally,
from~(\ref{20}) we notice that the first column of
$\widetilde{\mathbf{H}}$
is the vector
$\mathbf{D}\cdot \mathbf{h}$:
\begin{equation}
\label{21}
\widetilde{\mathbf{H}}
=
\left[
\begin{smallmatrix}
h[0]      &  0     & \cdots & 0                & \psi^{L-1}h[L-1]  & \psi^{L-2} h[L-2] & \cdots    & \psi h[1]   \\
\psi h[1] &  h[0]  & 0      & \ddots           & 0                 & \psi^{L-1} h[L-1] & \ddots    & \psi^2 h[2] \\
\vdots    & \ddots & \ddots & \ddots           & \ddots            & \ddots            & \ddots    & \vdots     \\
0         & \cdots & 0      & \psi^{L-1}h[L-1] & \psi^{L-2}h[L-2]  & \cdots            & \psi h[1] & h[0]
\end{smallmatrix}
\right]
.
\end{equation}
\end{proof}
A comparable result was proven in~\cite{muck2006pseudo} for the postfix OFDM.

The block diagram of the proposed system is shown in Fig.~\ref{fig:rec2}. Multicarrier modulation is performed by
\begin{equation}
\mathbf{x} = \mathbf{D}^{-1} \cdot \mathbf{F}_N^* \cdot \mathbf{X}.
\end{equation}

At the receiver side the signal is multiplied by $\mathbf{D}$, and then the DFT is applied to obtain
\begin{equation}
\begin{split}
\mathbf{Y}
&=
\mathbf{F}_N \cdot \mathbf{D} \cdot \mathbf{H}_\varphi \cdot \mathbf{D}^{-1} \cdot \mathbf{F}_N^\ast \cdot \mathbf{X}
+
\mathbf{F}_N \cdot \mathbf{D} \cdot \mathbf{w}
\\
&=
\mathrm{diag}
\left(H_\psi[0], H_\psi[1], \ldots, H_\psi[N-1]
\right)
\cdot
\mathbf{X}
+
\mathbf{F}_N \cdot \mathbf{D} \cdot \mathbf{w}
.
\end{split}
\end{equation}

Since $\mathbf{D}$ is a diagonal matrix, vector multiplications
by~$\mathbf{D}$ or~$\mathbf{D}^{-1}$ require only $N$ complex
multiplications. Therefore the arithmetic complexity of
modulation/demodulation operation is comparable to the CP case. On
the other hand, the proposed prefix requires an optimization
procedure to find the optimal value of $\psi$; this also requires an
additional complexity.

In general, the quantities $\varphi$ and $\psi$ can be any non-null
complex numbers that satisfy~(\ref{17}). However, if $| \varphi |
\ne 1$, then the absolute values of the elements of $\mathbf{D}$ and
$\mathbf{D}^{-1}$ are either very large or very small. This fact
makes the system prone to numerical instability issues.
Additionally, the PAPR is increased. To maintain the same PAPR, it
is required that $|\varphi| = 1$. This choice of $\varphi$ also
implies $|\psi|=1$.

\section{{Generalized} Prefix}
\label{section.optimal}

The proposed prefix construction transforms the wireless channel
$\mathbf{h}$ experienced by the system into an equivalent channel
with impulse response equal to the product $\mathbf{D} \cdot
\mathbf{h}$, whose transfer function is given by:
\begin{equation}
\label{24}
H_\psi(z)
=
h[0] +
\psi h[1] z^{-1} +
\psi^2 h[2] z^{-2} +
\cdots +
\psi^{L-1} h[L-1] z^{-(L-1)}
.
\end{equation}

It is now clear that the generalized prefix has a physical meaning:
the $n$th multipath component is multiplied by $\psi^n$. By adopting
$|\psi|=1$, we can write $\psi = e^{j\alpha}$, for
$\alpha\in[0,2\pi]$; thus only the phase of the multipath components
is affected.

The role of $\psi$ is better understood in terms of the
frequency response $H_\psi(e^{j \omega})$ of the equivalent channel,
which can be related to the frequency response $H(e^{j\omega})$ of
the actual channel:
\begin{equation}
\begin{split}
H_\psi(e^{j \omega})
=
\sum_{n=0}^{N-1}
\psi^n
h[n]
e^{-j \omega n}
=
H(e^{j(\omega-\alpha)})
,
\end{split}
\end{equation}
since $\psi = e^{j\alpha}$. Thus, the proposed prefix construction
represents a shift in frequency domain of the channel frequency
response.

Since the subcarriers are centered at the discrete frequencies
$\omega_k = 2\pi k/N$, $k=0,1,\ldots,N-1$, it is important to
characterize the associated quantities $H[k] = H(e^{j\omega_k})$.
Suppose that the channel frequency response
has a null at $\omega_k$, for some $k$. In this case, the frequency
response of the channel, actually experienced by the system
can avoid this situation by shifting the spectral null
away from the location $\omega_k$.

It must be noted that $\varphi = \psi^N = 1$ or  $\varphi = \psi^N =
-1$, which correspond to circular or skew-circular convolution,
respectively, may not be adequate choices. In fact, this selection
of $\varphi$ entails $\psi = e^{j2\pi/N}$ or $\psi = e^{j\pi/N}$.
Thus, the implied shift in frequency is $\alpha = 2\pi/N$ or $\alpha
= \pi/N$, respectively. Since the subcarriers are exactly $2\pi/N$
rad/samples apart from each other, the spectral shift generated by
circular or skew-circular convolution may simply move an existing
zero of the sampled transfer function from one subcarrier to
another. This would not be effective. In order to appropriately
avoid spectral nulls, it is necessary to have a greater freedom of
choice for $\psi$.

The distinct advantage of the generalized prefix is that for
certain values of $\psi$ the wireless channel experienced
by the transmitted OFDM symbols can lead to a lower bit error rate.
In principle, any choice of $\alpha\neq 2\pi m /N$, where $m$ is an
integer, is likely to improve the performance of the above described
multicarrier modulation system up to certain extent. Even a random
selection scheme for the value of $\psi$ could represent an
enhancement in avoiding spectral nulls. However, there exists an
optimal choice of $\psi$ that leads to the lowest bit error rate,
therefore increasing system performance.
Next we discuss two methods of finding the optimal value of $\psi$.

\subsection{Minimal Probability of Error Approach}

In the first approach, we consider the probability of bit error the
figure of merit to be optimized.
Thus, the minimization problem is:
\begin{align}
\label{optim.problem.exact}
\alpha^\ast
=
\arg
\min_{\alpha \in \left[0, \frac{2\pi}{N} \right]}%
P_e(\psi)|_{\psi=e^{j\alpha}},
\end{align}
where
$\alpha^\ast$ is the optimum frequency shift
and $P_e(\psi)$ is the bit error probability of
the  proposed system
given by
\begin{equation}\label{26}
P_e(\psi) =
\frac{1}{N}\sum_{k=0}^{N-1}
P_\text{QAM}
\left(
\frac{N}{N + K} \frac{E_b}{N_0}
| H_\psi[k] |^2
\right)
.
\end{equation}
The above equation is analogous to (\ref{11}).
We assume that $P_e(\cdot)$ is continuous and unimodal over the
search space for a fixed $E_b/N_0$.
Additionally, $P_e(\cdot)$ is a single-variable bounded nonlinear
function. Because the frequency separation between adjacent
subcarriers is $2\pi/N$ the search space can be $[0, 2\pi/N]$, and
the set of frequency shifts $\{\alpha^\ast + 2\pi m /N \}$, where
$m$ is an integer, constitutes a class of equivalence.

Under the above conditions,
the suggested minimization problem can be solved
without resorting to derivatives
by means of numerical techniques such as
the golden section search method.
Let us assume that $[a,b]$ is a given interval where the sought
minimum is located. In this case, the points $p$ and $q$ can be
calculated as follows:
\begin{equation}
\begin{split}
p &= b - (b-a) \Phi, \\
q &= a + (b-a) \Phi,
\end{split}
\end{equation}
where $\Phi=\frac{\sqrt{5}-1}{2}$ is the golden ratio conjugate.
After evaluating the function $P_e(\cdot)$ at points $p$ and $q$, a
new search interval is established according to whether $P_e(e^{jp})
\leq P_e(e^{jq})$ or not \cite{geraldapplied}. The algorithm is
summarized in Fig.~\ref{golden}.

\begin{figure}
\hrulefill

\footnotesize
   \textbf{Input:} Initial interval $[a,b]$; tolerance $\epsilon$; objective function $P_e(\cdot)$. \\
   \textbf{Output:} Optimum frequency shift $\alpha^\ast$. \\
   \textbf{Method:} Golden section search. \\ [-0.45cm]

\hrulefill

\begin{algorithmic}[1]

\STATE $\Phi \leftarrow (\sqrt{5}-1)/2$
\STATE $p \leftarrow b
-(b-a) \Phi$
\STATE $q \leftarrow a + (b-a) \Phi$
\STATE $f_p
\leftarrow P_e(e^{j p})$
\STATE $f_q \leftarrow P_e(e^{j q})$

\WHILE{$b-a \ge \epsilon$}

\IF{$f_p \leq f_q$}
  \STATE $b \leftarrow q$
  \STATE $q \leftarrow p$
  \STATE $p \leftarrow b - (b-a) \Phi$
  \STATE $f_q \leftarrow f_p$
  \STATE $f_p \leftarrow P_e(e^{j p})$
\ELSE
  \STATE $a \leftarrow p$
  \STATE $p \leftarrow q$
  \STATE $q \leftarrow a + (b-a) \Phi$
  \STATE $f_p \leftarrow f_q$
  \STATE $f_q \leftarrow P_e(e^{j q})$
\ENDIF

\ENDWHILE

\RETURN $\alpha^\ast  \leftarrow (a + b)/2$
\end{algorithmic}
\hrulefill \caption{Minimization algorithm: golden section search.}
\label{golden}
\end{figure}

\subsection{Max-min Approach}

Searching for the optimal value of $\alpha$ by solving the
optimization problem as described in (\ref{optim.problem.exact})
involves several calls to the $Q$-function and other arithmetical
operations. Also the function $P_e(\cdot)$ depends on the chosen
modulation order.

Instead of~(\ref{optim.problem.exact}), as an alternative, we can
consider the following optimization problem:
\begin{align}
\label{optim.problem.alternative}
\alpha^\ast= \max_{\alpha \in \left[0, \frac{2\pi}{N} \right]}
\min_k |H_\psi[k]|_{\psi=e^{j\alpha}}.
\end{align}
Intuitively,
the above problem returns the value of $\alpha$ (therefore, $\psi$)
that makes the minimum values of $H_\psi[k]$
(deep fades) to be as large as possible.

Our simulations indicate that solving the above minimax type
optimization problem~\cite{drezner1982minimax} leads to the same
numerical results as the optimization technique in
from~(\ref{optim.problem.exact}). The discussed above golden search
method is also employed in this procedure.

\subsection{Practical Considerations}

An important feature of the proposed technique is that the optimal
value can be determined not only at the transmitter (if the
transmitter has channel state information~(CSI)), but also at the
receiver. Although making~CSI available at the transmitter is
difficult for both time division duplex and frequency division
duplex modes~\cite{rong2006adaptive} recent wireless systems such as
802.11 can provide CSI at the transmitter~\cite{IEEE802.11WG}.

Still, a considerable amount of information must be sent back to the
transmitter, especially for mobile systems. Therefore, we propose
that $\psi$ is calculated at the receiver and only the value of
$\psi$ is sent to the transmitter. In this case, feedback can be
provided with minimum overhead and very quickly to match the channel
variability.

\subsection{Computational Complexity}

In the proposed system the transmitter is required to compute $2N+K$
complex multiplications~(CMs): $K$ CMs to generate the prefix
in~(\ref{13}); $N$ CMs for $\mathbf{D}^{-1}$; and $N$ CMs to perform
multiplication by $\mathbf{D}^{-1}$ according to~(\ref{19}).
At the receiver, the diagonal matrix~$\mathbf{D}$ must also be
obtained from the value of $\psi$ and the received signal must be
multiplied by~$\mathbf{D}$; each of these operations requires $N$
CMs. Now let us evaluate the complexity of finding the optimal value
of $\psi$.

\subsubsection{Probability Based Optimization Approach}

First we analyze the method described in (\ref{optim.problem.exact}).
To find the needed optimal $\psi$ for~$\mathbf{D}$, the receiver
calculates the associated cost function by using~(\ref{24})
and~(\ref{26}) in the golden search algorithm.
We assume that this optimization algorithm
takes $Z$ iterations to converge.
Thus,
for each iteration,
we have the following computational complexities.

The evaluation of~(\ref{24}) requires only~$L-1$ CMs.
An $N$-point DFT call is required
to provide the magnitude of the frequency response
of the equivalent channel
as shown in~(\ref{26}).
The multiplicative complexity of the DFT is in~$\mathcal{O}(N \log_{2}N)$.
The evaluation of function~$P_\text{QAM}$ in~(\ref{26})
requires the $Q$-function~\cite{szczecinski2006arbitrary}.
Assuming that the~$Q$-function values are stored in a lookup table,
(\ref{26}) requires
$N$ real multiplications by a constant in the argument of~$P_\text{QAM}$
and
an $N$-point summation.

Therefore, the proposed prefix based system increases the
computational complexity by $2N+K + 2N + Z \times (N + L - 1 + N
\log_2 N)$ CMs.

\subsubsection{Max-min Optimization Approach}

On the other hand,
the described max-min method
as shown in (\ref{optim.problem.alternative})
is computationally less expensive.
Indeed,
in comparison to~(\ref{optim.problem.exact}),
it eliminates
(i)~the evaluation $P_\text{QAM}$;
(ii)~the $N$-point summation;
and
(iii)~the required multiplications by constants.
Since the function $P_\text{QAM}$ is not needed, it also does not
require any sort of lookup table.

It requires
(i) the evaluation of (\ref{24}) ($L-1$ CMs);
(ii) a DFT call;
and
(iii) the identification of the smallest element in a vector.
The DFT computation is in
$\mathcal{O}(N \log_2 N)$.
Finding the minimum in a vector is comparatively
a simple operation~\cite{cormen2001algorithms}
(cf.~(\ref{optim.problem.exact})).
Thus,
we have a total multiplicative complexity
of
$2N + K + 2N + Z(L-1 + N \log_2 N)$.
Due to its simplicity,
the max-min method may be regarded as
computationally faster than the first approach.

For both methods, the above detailed estimates of the number of
multiplications are worst case scenarios. This is due to the fact
that we are assuming that the DFT computation would require
exactly~$N \log_2 N$ multiplications, which is an upper bound. For
example, assuming~$N=512$ and using the Rader-Brenner radix-two fast
Fourier transform~\cite{blahut2010fast}, the DFT can be computed in
just 3076 real multiplications, which is well below $512 \times
\log_2 512 = 4608$~CMs (13824 real
multiplications)~\cite{blahut2010fast}.

For $N \gg L$ and $N \gg K$, as is the case in practical systems,
the computational complexity is $\mathcal{O}(N)$ at the transmitter.
At the receiver side, we have an asymptotical computational
multiplicative complexity in $\mathcal{O}(Z N \log N)$ for both
methods.

{Overall the computational complexity of the proposed method
is higher than the complexity of CP-OFDM for both optimization
techniques. However, $\psi$ can be calculated at the beginning and
can be used while the channel is assumed to be static. Therefore, we
can say that at the expense of a minor increase in the computational
complexity we can achieve a performance gain which will be shown in
the simulation section.}

\section{Simulation Results}
\label{section.simulation}

{In this section,
Monte Carlo simulations of data transmission
are used to analyze the average BER performance of the proposed
generalized prefix for two different scenarios.
For each Monte Carlo run,
a channel is generated according to the simulation environment
described in its respective scenario.}
The first scenario is a simple example to illustrate the effectiveness of the
proposed technique.
The proposed prefix in this case transforms the
linear convolution into a skew-circular convolution.
In a second scenario, we present results obtained from a practical
example, where random multipath channels were optimized. In our
simulations, we selected the tolerance factor for the golden section
search as $\epsilon=10^{-3}$. In this case, the golden search
algorithm required less than 10 iterations to find an optimal value
$\psi = e^{j \alpha}$ in the search space $\alpha \in [0, 2\pi/N]$.

\subsection{Example 1}

To demonstrate the advantage of the proposed prefix, first we use a
simple wireless channel with only two paths $H(z) = \left( 1 +
z^{-1} \right)/\sqrt{2}$. Then  $H[k] = \left( 1 + {e^{-j2\pi k/N}}
\right)/\sqrt{2}$. Considering an OFDM system with $N=64$, we notice
that the 32th subcarrier has a spectral null, since $H[32]=0$. This
leads to an error floor of $1/(2\times 64)=0.0078125$.

The use of a cyclic prefix scheme means that $\alpha$ is
$2\pi/N\approx 0.09817477$. But, this frequency domain shift only
moves the existing spectral null from one subcarrier to another and
the system error floor remains unchanged. This simple yet useful
example illustrates {some difficulties} of the cyclic prefix.

To develop some intuition about the significance of the optimum
value of $\psi=e^{j\alpha}$, the bit error rate (BER) performance of
the proposed system is shown in Fig.~\ref{fig:rec3} for different
values of $\alpha$ at selected values of $E_b/N_0 \in \{ 20, 30, 35
\}$~dB. It is shown that the chosen $\psi$ plays an important role
in the BER performance of the proposed system. Moreover, it is
observed that the objective function is unimodal over the range
$\alpha \in [0, 2\pi/N] \simeq [0, 0.1]$, for $N=64$. Also, its
minimum values are numerically found at points of the form $\psi =
e^{j(2m+1)\pi/N}$, where $m$ is an integer. This latter fact is
analytically discussed below. The optimum value numerically obtained
is $\alpha^\ast \approx 0.05$. This value is in agreement with the
overall behavior of the channel frequency response. In fact, given
that there is a single spectral null, which is located at subcarrier
$k=N/2$,
it is expected that the best way to move the null away from the
affected subcarrier is to place it between two adjacent subcarriers.
This can be accomplished by a minimum frequency shift of $|\omega_k
- \omega_{k\pm1}|/2 = \pi/N \approx 0.049087385$. In this case, the
proposed prefix transforms the linear convolution into a
skew-circular convolution. Considering the discussed class of
equivalence, any frequency shift of the form $\alpha^\ast = \pi/N +
2\pi m/N = (2m+1)\pi/N$, for any integer $m$, is {adequate
for our purposes}.

\begin{figure}%
\centering
\epsfig{figure=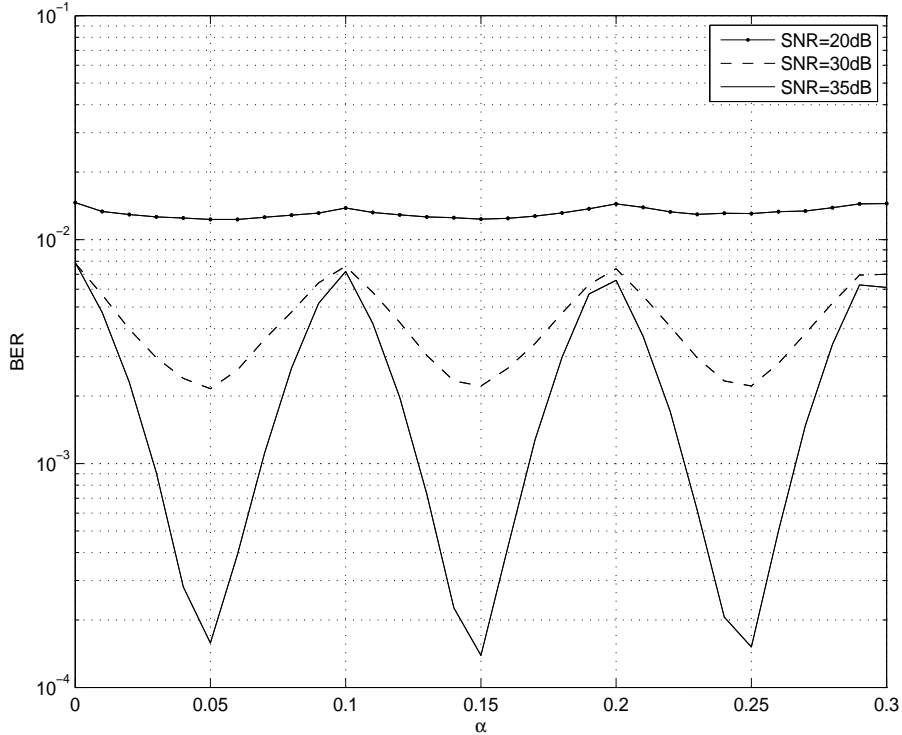,height=10cm,width=12cm}
\caption{BER performance for different $\alpha$ values.}
\label{fig:rec3}
\end{figure}

Finally, the proposed system is compared with CP-OFDM and ZP-OFDM in
Fig.~\ref{fig:rec6} for the following parameters: 4-QAM modulation,
$N=64$, and $K=16$. It is seen that CP-OFDM has an error floor at
about $10^{-2}$. It is also seen that using the proposed prefix
leads to a substantial performance advantage. In particular, it
outperforms ZP-OFDM by about 5~dB at $\text{BER} = 10^{-4}$ (in
addition to having a much lower computational complexity).

\begin{figure}%
\centering
\epsfig{figure=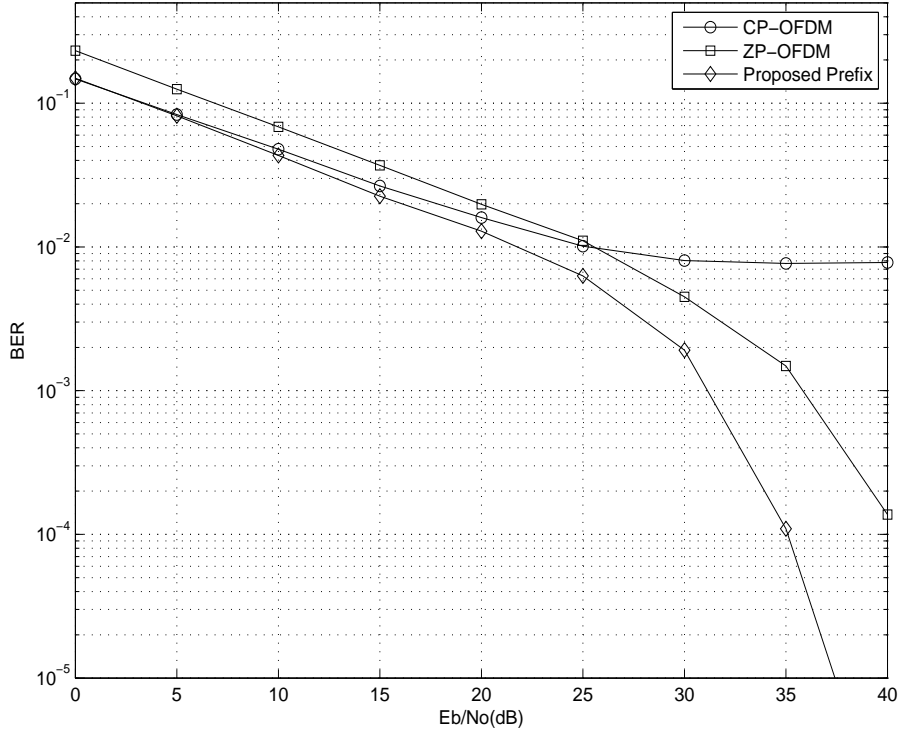,height=10cm,width=12cm}
\caption{Comparison of bit error rates for OFDM with cyclic prefix,
OFDM with zero-padding, and the proposed prefix.}
\label{fig:rec6}
\end{figure}

\subsection{Example 2}

Now we consider a system appropriate for outdoor wireless
communications with a 5~MHz bandwidth, divided into $N=512$ tones.
If we let the prefix length to be $K=64$, the total OFDM symbol
would consist of $N+K = 576$~samples. This implies a total symbol
period
of 115{.}2 $\mu$s, of which 12{.}8 $\mu$s is allocated to the
prefix. {The OFDM frame has a time duration 16.12~ms,
consisting of 20~slots, each 0.806~ms. Each slot consists of seven
OFDM symbols. In the case of block-type channel estimation, the
first OFDM symbol of each slot is allocated as pilot tones.}
The QPSK modulation format is employed. Additionally, the wireless
channel between the transmitting antenna and the receiver antenna is
modeled according to realistic channel specifications determined by
the COST-207 project~\cite[p.~82]{du2010wireless}. {In the
following, quasi-static channel is assumed unless otherwise
specified.} We considered the typical urban (TU) and the bad urban
(BU) channel models with a 12-tap channel model. {For the TU
channel model the delay spread is approximately 1~$\mu$s (-10~dB on
5~$\mu$s). For the BU channel model the delay spread is
approximately 2.5~$\mu$s (-15~dB on 10~$\mu$).} {The power
delay profile (PDP) of the channels that have been used is given in
Fig~\ref{fig:PDP}.}

\begin{figure}%
\centering \epsfig{figure=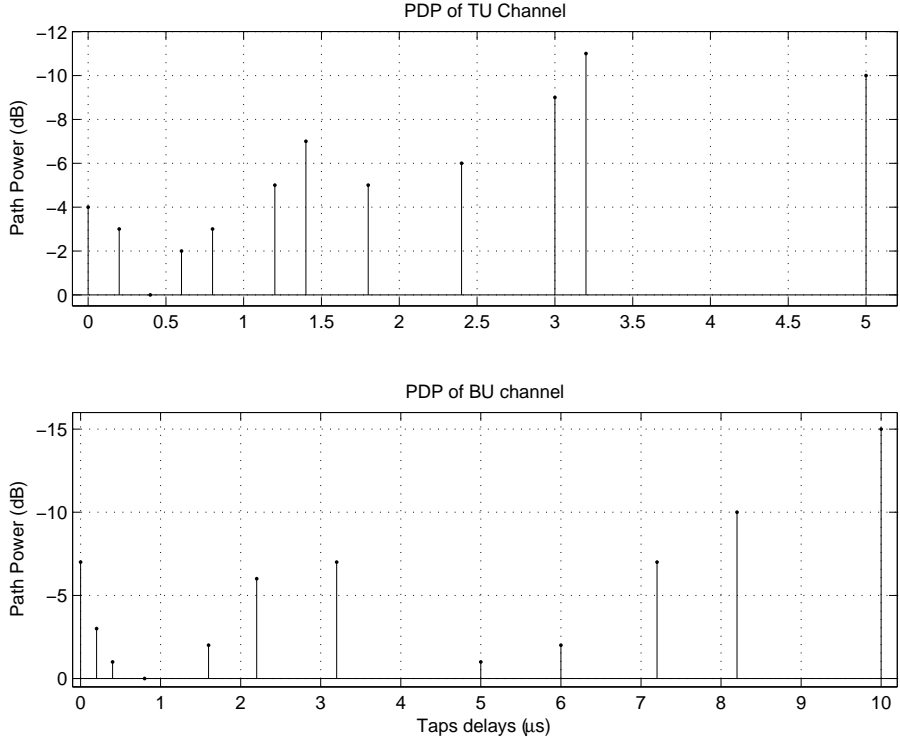,height=10cm,width=12cm}
\caption{Power delay profiles of the COST 207 channels.}
\label{fig:PDP}
\end{figure}

\subsubsection{Perfect CSI}

Assuming that perfect CSI is available at the receiver, the BER of
CP-OFDM and the proposed method is shown in Fig.~\ref{fig:recE1}.

\begin{figure}%
\centering
\epsfig{figure=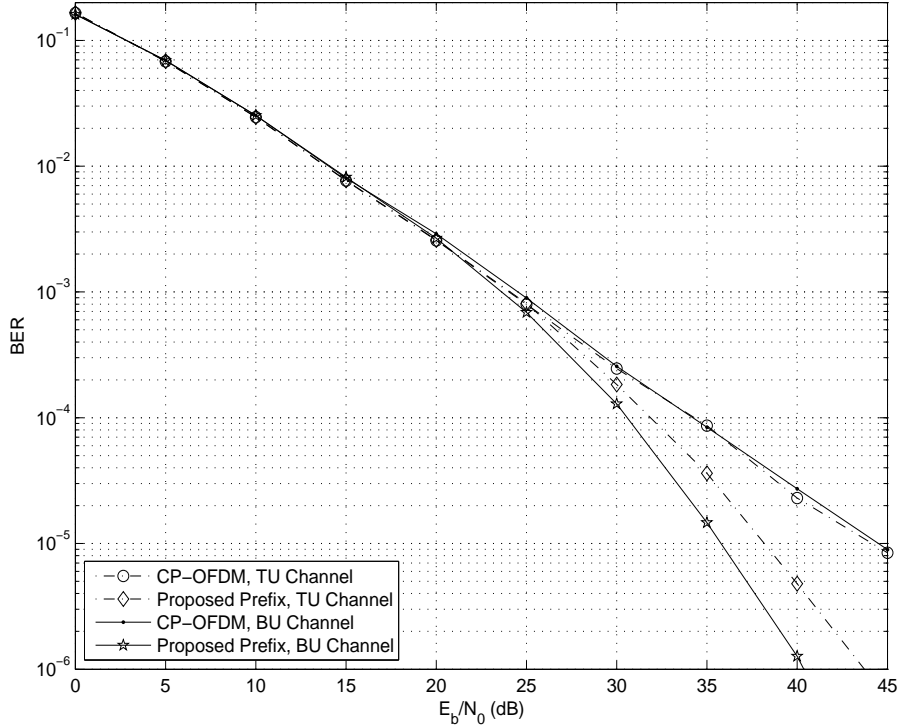,height=10cm,width=12cm}
\caption{BER comparison of CP-OFDM and OP-OFDM over COST 207 channels (12 taps).}
\label{fig:recE1}
\end{figure}

Considering TU channels, the proposed prefix gains about 7~dB over
CP-OFDM at $\text{BER} = 10^{-5}$. For BU channels, the gain at
$\text{BER}=10^{-5}$ is about 9~dB.
These significant performance advantages intuitively can be
explained as follows. Notice that longer delay spreads generally
lead to more deep fades and spectral nulls. Therefore, BU channels
have more deep fades than TU channels. For this reason, the the
performance advantage of the proposed system is grater over BU
channels. Therefore, the proposed prefix construction effectively
mitigates the effects of multipath propagation.

Additionally, the performance difference between the proposed prefix
and CP-OFDM increases as the SNR increases. This is especially
noticeable for TU channels.

\subsubsection{Impact of Channel Estimation}

In the previous subsection perfect CSI was assumed.

In this subsection, we assume that the channel is not known exactly
and channel estimation is necessary.
Here, channel estimation is performed according to one of two
approaches: (i) block-type estimation, which inserts pilot tones
into all of subcarriers; or (ii) comb-type estimation, which
uniformly inserts pilot tones into a OFDM
symbol~\cite{coleri2002estimation}.

In our simulations we consider a quasi-static fading model, which
allows the channel state to be constant during the transmission of a
block consisting of several OFDM symbols~\cite{stamoulis2002fading}.
In the case of block-type estimation,
only the first received OFDM symbol
was employed for the channel estimation and optimization procedures.
Then the optimal $\psi$ was returned to the transmitter.
Such optimized $\psi$ and the estimated channel response were
reused at the receiver for the remaining OFDM symbols in the slot.
The pilot-assisted channel estimation scheme that we use is based on
least square (LS) techniques~\cite{tong2004pilot}. In the case of
comb-type estimation, the pilot insertion rate (PIR) was chosen to
be 1:4. Channel estimation and optimization were performed for each
OFDM symbol. The optimized values of $\psi$ were returned to the
transmitter continuously.
Equalization was done at the receiver by using the current $\psi$ values
and estimated channel response.
Additionally,
a low-pass interpolation was also employed after LS estimation.

Imperfect CSI leads to some performance degradation for all systems
compared to perfect CSI.
 Fig.~\ref{fig:recE2} shows that for CP-OFDM there is a
3~dB loss at $\text{BER}=10^{-4}$ in the case of block-type channel
estimation. In contrast, the loss for the proposed system is only
0.8~dB.

\begin{figure}%
\centering
\epsfig{figure=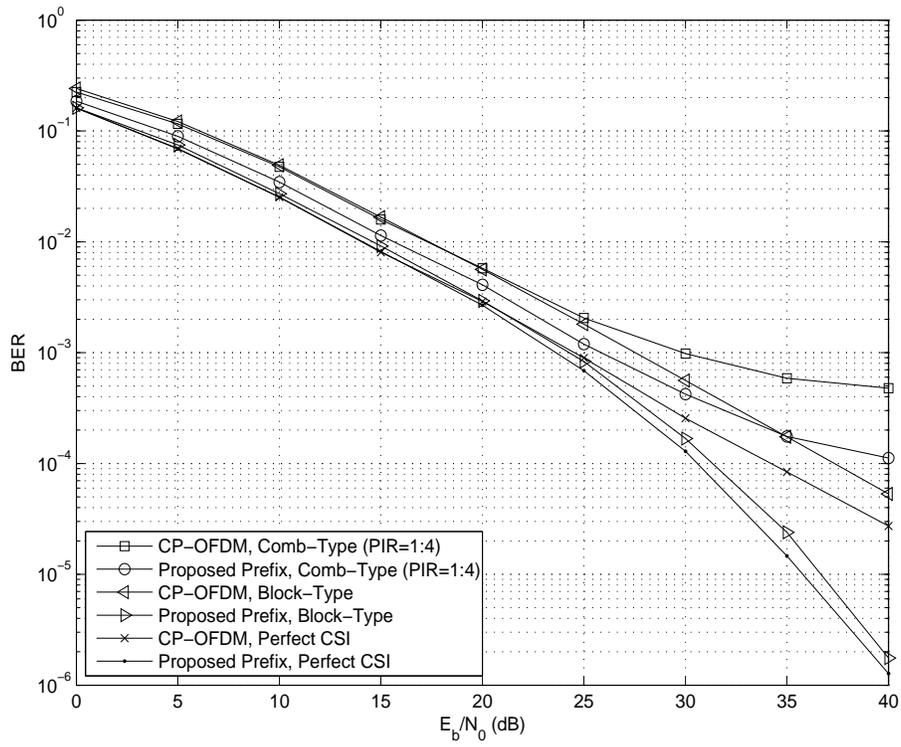,height=10cm,width=12cm}
\caption{Impact of estimated channel impulse response on BER performance of the compared systems for BU channel (12 taps).}
\label{fig:recE2}
\end{figure}

Therefore,
the proposed prefix is
more robust to channel estimation errors
than the CP-OFDM.
Similarly to the results in the previous subsection, the performance
difference between the two considered systems increases as the SNR
is increased, regardless of the type of channel estimation.

It is also of interest to investigate the performance of the
proposed system in the presence of mobility.
{In Fig.~\ref{fig:recE3}, the BER of the CP-OFDM and the
proposed system are shown for mobility of 20~km/h where the Doppler
frequency is equal to 44.44~Hz.}
As expected, both systems have a significant performance loss for
the block-type channel estimation.
However, as expected, for both systems, comb-type estimation schemes
are more appropriate when there is mobility.

\begin{figure}%
\centering
\epsfig{figure=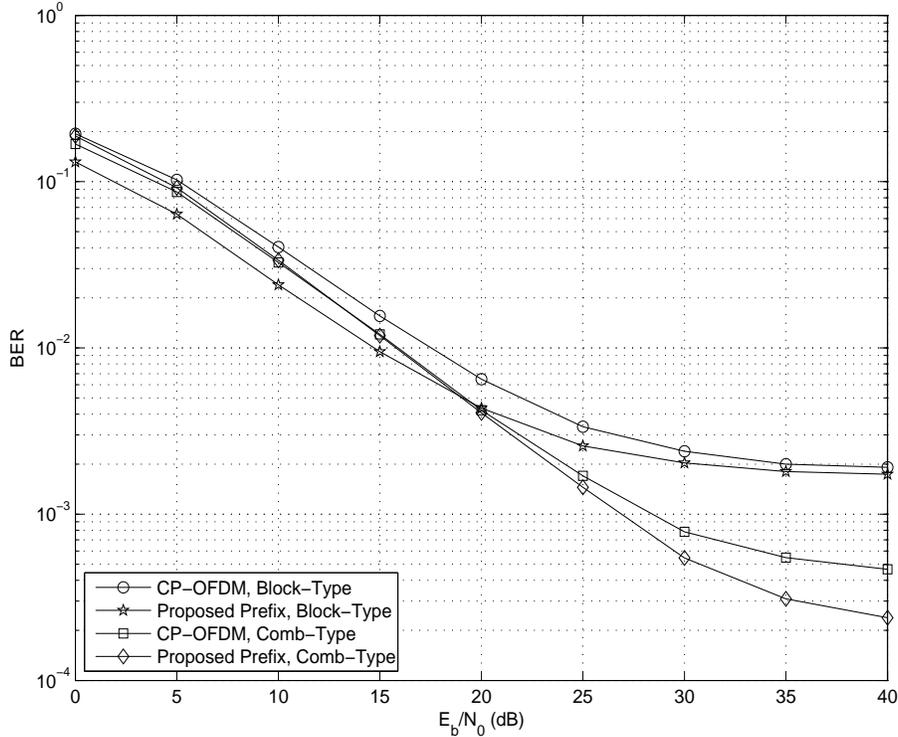,height=10cm,width=12cm}
\caption{Impact of mobility on BER performance of the compared systems for BU channel (12 taps) at 20~km/h.}
\label{fig:recE3}
\end{figure}

\section{Conclusions}
\label{section.conclusions}

In this paper a new prefix for OFDM systems is proposed. The
proposed system has the lowest bit error rate among the other prefix
construction techniques, considering all other parameters identical.
The prefix that is used is not random, it is determined using a
computationally simple optimization routines.
This prefix shifts the phases of the multipath components and
effectively changes the wireless channel experienced by the OFDM
system into a different channel. In particular, a wireless channel
that has deep fades or spectral nulls is transformed into a channel
with fades that are less deep or no spectral nulls. This property of
the proposed system is equivalent to claiming that it ensures
maximum diversity gain~\cite{wang2001linearly}.

The reason for the performance advantages of the proposed technique
can explained intuitively. For OFDM the probability of error is the
average probability of error on all subcarriers, as in (\ref{11}).
In the presence of deep fades at some subcarriers, the probability
of error for these subcarriers is large and will dominate. However,
it is best if the probabilities of error for each subcarrier are
identical or as much as possible similar. Then the probability of
error of the system is minimized. This is what the new technique
achieves. The simulation results indicate that the performance
advantages are significant in all cases of interest. The performance
improvement approaches 10 dB in the vicinity of $\text{BER} =
10^{-5}$ for BU channels. The technique is also superior in the
presence of channel estimation errors and mobility. Also it is
overall computationally simple and allows the use of
cross-correlation or autocorrelation-based synchronization methods.
The new method transforms the linear convolution into a convolution
operation called generalized skew-circular convolution.

This research has further significance for orthogonal
frequency-division multiple access (OFDMA) systems, in which each
user experiences a different channel. In OFDMA systems with dynamic
allocation, the base station informs the users about the assigned
subcarriers~\cite{SunFanglei2009}. In this case, due the dynamic
variation of the number of active users, the optimization is
computationally complex and the bandwidth consumed by the feedback
channel is significant~\cite{song2005utility}. Static allocation is
simpler, but some of the assigned subcarriers may be experiencing
deep fades. According to the proposed technique each user can apply
an individual generalized prefix~\cite{khattab2006opportunistic}.
Therefore, the system performance can be increased for both static
and dynamic allocation systems. This is left as a topic for future
research.

\section*{Acknowledgments}

This work was supported in part
by Raytheon Corp., Fort Wayne, IN, through SERC grant 204135,
by the Research Fund of Istanbul University under project YADOP-6265,
by \emph{Conselho Nacional de Desenvolvimento Cient\'ifico e Tecnol\'ogico} (CNPq)
and
by FACEPE, Brazil.

{\small
\bibliographystyle{IEEEtran}
\bibliography{ref}
}

\end{document}